\newtheoremstyle{note}
  {\topsep/2}               
  {\topsep/2}               
  {}                      
  {\parindent}            
  {\itshape}              
  {.}                     
  {5pt plus 1pt minus 1pt}
  {}
\theoremstyle{note}
\newtheorem{theorem}{Theorem}
\newtheorem{lemma}{Lemma}
\theoremstyle{definition}
\theoremstyle{remark}
\def\vec#1{\bm{#1}} 
\newcommand{\tr}{\operatorname{tr}}
\newcommand{\diag}{\operatorname{diag}}
\newcommand{\rep}{\mathrel{\widehat{=}}}
\newcommand{\rmi}{\mathrm{i}}
\newcommand{\rme}{\mathrm{e}}
\newcommand{\rmT}{\mathrm{T}}
\newcommand{\rmC}{\mathrm{C}}
\newcommand{\rmU}{\mathrm{U}}
\newcommand{\be}{\begin{equation}}
\newcommand{\ee}{\end{equation}}
\newcommand{\ba}{\begin{align}}
\newcommand{\ea}{\end{align}}
\def\<{\langle}  
\def\>{\rangle}  
\def\eqref#1{\textup{(\ref{#1})}}  
\newcommand{\eref}[1]{Eq.~\textup{(\ref{#1})}}
\newcommand{\Eref}[1]{Equation~\textup{(\ref{#1})}}
\newcommand{\thref}[1]{Theorem~\ref{#1}}
\newcommand{\thsref}[1]{Theorems~\ref{#1}}
\newcommand{\Thsref}[1]{Theorems~\ref{#1}}
\newcommand{\lref}[1]{Lemma~\ref{#1}}
\newcommand{\lsref}[1]{Lemmas~\ref{#1}}
\newcommand{\cref}[1]{Conjecture~\ref{#1}}
\newcommand{\Cref}[1]{Conjecture~\ref{#1}}
\newcommand{\rcite}[1]{Ref.~\cite{#1}}
\newcommand{\rscite}[1]{Refs.~\cite{#1}}
\begin{document}
\title{Quasiprobability representations of quantum mechanics with minimal negativity}
\author{Huangjun Zhu}
\email{hzhu1@uni-koeln.de, zhuhuangjun@gmail.com}
\affiliation{Institute for Theoretical Physics, University of Cologne,  Cologne 50937, Germany}

\pacs{03.65.Ta, 03.65.-w, 03.67.-a}



\begin{abstract}
Quasiprobability representations, such as the Wigner function, play an important role in various research areas. The inevitable appearance of negativity in such representations is often regarded as a signature of nonclassicality, which has profound implications for quantum computation. However, little is known about the minimal  negativity that is necessary in general quasiprobability representations. Here we focus on  a natural class of quasiprobability representations that is distinguished by simplicity and economy. We  introduce  three measures of negativity concerning the representations of quantum states, unitary transformations, and quantum channels, respectively. Quite surprisingly, all three measures lead to the same 
 representations with minimal negativity, which are in one-to-one correspondence with the elusive symmetric informationally complete measurements. In addition,   most representations with minimal negativity are automatically covariant with respect to the Heisenberg-Weyl groups. Furthermore, our study  reveals an interesting tradeoff between negativity and symmetry in quasiprobability representations.

\end{abstract}

\date{\today}
\maketitle

Quasiprobability representations (QPRs) provide an intuitive  way of formulating  quantum theory in close analogy to classical theory. Archetypal  QPRs include  the Wigner function \cite{HillOSW84} and its discrete analogs \cite{Woot87,GibbHW04, Gros06, FerrE09, Ferr11, Zhu16P}, which are very useful  in many research areas, such as quantum optics, quantum tomography, and quantum computing. 
The inevitable appearance of negativity in such representations is often regarded as a signature of nonclassicality, which is  closely related to other nonclassical features \cite{KenfZ04, Spek08, FerrE08,FerrE09, Ferr11}. In particular, it was shown by Spekkens  that negativity and contextuality are equivalent notions of nonclassicality \cite{Spek08}. 
Moreover,  negativity has been recognized 
as a resource in quantum computation and has attracted increasing attention in the quantum information community \cite{Galv05,CormGGP06,VeitFGE12, VeitMGE14,HowaWVE14, DelfABR15,PashWB15}. To be specific, in the paradigm of magic state quantum computation \cite{BravK05}, negativity in the Wigner function is necessary for  computational speedup and is directly related to the efficiency of classical simulation \cite{VeitFGE12, VeitMGE14,PashWB15}.

To fully understand the distinction between quantum theory and classical  theory in terms of  negativity, one should not only consider the representation of quantum states, but also representations of measurements and  transformations. In addition, it is  crucial to consider the whole class of  QPRs instead of a particular choice. This advice is also helpful for understanding the power of quantum information processing as well as the connection between negativity and contextuality  \cite{Spek08, FerrE08,FerrE09}. However, little is known about the degree of negativity in any QPR other than the Wigner function and its discrete analogs.

In this paper we initiate a systematic study of the degree of negativity in general QPRs. Our focus is an important family of QPRs in finite dimensions that is distinguished by simplicity and economy. It  includes most examples proposed in the literature, such as many discrete Wigner functions \cite{Woot87,GibbHW04, FerrE09, Ferr11}. In these representations, information is encoded without redundancy; the analogy to classical probability theory is preserved as far as possible except for the appearance of negativity.

We then introduce three measures of negativity concerning representations of quantum states, unitary transformations, and quantum channels, respectively. Quite surprisingly, all of them lead to the same  representations with minimal negativity, which are in one-to-one correspondence with the  elusive symmetric informationally complete measurements (SICs for short) \cite{Zaun11, ReneBSC04, Appl05, ScotG10,  ApplFZ15G}.  The importance of SICs to foundational studies was first emphasized by Fuchs \cite{Fuch04}. Now SICs are  useful in many different contexts, especially  in the study of quantum Bayesianism \cite{Fuch10,ApplEF11,FuchS13}. Our results further corroborate the key role of SICs in understanding the nonclassicality inherent in quantum mechanics. 
We also  classify all representations with minimal negativity in dimensions 2 and 3 and explain the situation in higher dimensions. Most of these representations  are  covariant with respect to the Heisenberg-Weyl (HW) groups, which are intimately connected to another nonclassical feature, namely, uncertainty relations.
In addition, we  determine those unitary transformations with a nonnegative representation and elucidate  the distinction between  representations with minimal negativity and discrete Wigner functions. Our study  reveals an interesting tradeoff between negativity and symmetry, which has implications for both quantum foundations and quantum computation.

A QPR of quantum states is a linear map from quantum states to normalized real functions  on  a certain set, such as  phase spaces for  discrete Wigner functions \cite{Woot87,GibbHW04, FerrE09, Ferr11}. It is  conveniently expressed in terms of an \emph{operator frame}, which is a set of  operators that span the operator space. Here we are only concerned with frames composed of Hermitian operators. Any operator frame $\{F_j\}$ provides a QPR via the map $\rho\rightarrow \mu_j(\rho):=\tr (\rho F_j)$. Conversely, any QPR can be cast into this form \cite{FerrE08, FerrE09}. The normalization condition $\sum_j \mu_j(\rho)=1$ implies that $\sum_j F_j=1$. 
 Any operator frame in dimension $d$ has at least $d^2$ elements; those with $d^2$ elements are  minimal and form operator bases.

 To formulate a QPR of quantum mechanics, we also need to provide a suitable representation of measurements \cite{FerrE08,FerrE09}, characterized by positive-operator-valued measures (POVMs). Recall that a POVM $\{M_\xi\}$ is composed of a set of positive operators,  known as effects, that sum up to the identity operator 1. 
 To this end it is convenient to introduce the concept of dual frames. A dual frame  of  $\{F_j\}$ is a frame $\{Q_j\}$ that satisfies the  equation $
 \sum_j \tr(B F_j)\tr(Q_j C)=\tr(BC)$
 for any pair of operators $B,C$ acting on the given Hilbert space. If $Q_j$ are constrained to satisfy $\tr(Q_j)=1$, then 
 $\sum_\xi \nu_j(M_\xi)=1$ for $\nu_j(M_\xi)=\tr(Q_jM_\xi)$, so that $\nu_j(M_\xi)$ can be interpreted as conditional quasiprobabilities, when rewriting the Born rule as 
 \begin{equation}
\tr(\rho M_\xi)=\sum_j\mu_j(\rho)\nu_j(M_\xi).
 \end{equation} 
The dual frame is generally not unique, but it is unique for a minimal frame, in which case it is determined by the equation $\tr(F_jQ_k)=\delta_{jk}$.

 A  frame $\{F_j\}$ is self dual if there exists a dual frame $\{Q_j\}$ that is proportional to the original frame, that is, $Q_j=c F_j$ for some constant $c$. A minimal frame is self dual iff it is an orthonormal basis up to a scaling constant.
A QPR  based on a minimal self dual  frame  is \emph{normal}.
Now the constraints $\sum_j F_j=1$ and $\tr(Q_j)=1$ imply that $Q_j=dF_j$, $\tr(F_jF_k)=\delta_{jk}/d$, and $\tr(Q_jQ_k)=d\delta_{jk}$.
The representation is completely determined by the dual frame $\{Q_j\}$ via the equations $\mu_j(\rho)=\tr(\rho Q_j)/d$ and $\nu_j(M_\xi)=\tr(M_\xi Q_j)$.
When there is no danger of confusion, we shall identify $\{Q_j\}$ with the normal quasiprobability representation (NQPR).

 NQPRs include  most discrete Wigner functions proposed in  literature \cite{Woot87,GibbHW04,FerrE09, Ferr11}.
 They are particularly appealing because of a number of merits: there is no redundancy in the representation; the symmetry between quantum states and effects is preserved as far as possible; it is easy to convert between states, effects and their representations. In addition, unitary transformations can be represented by doubly quasistochastic matrices in close analogy to doubly stochastic matrices, as shown below.

 Given a unitary $U$ and NQPR $Q=\{Q_j\}$ in dimension~$d$, we have $\mu_j(U\rho U^\dag)=\sum_k U^Q_{jk} \mu_k(\rho)$, where  $U^Q$ is an orthogonal matrix with  $U^Q_{jk}=\tr(Q_j UQ_kU^\dag)/d$ satisfying $\sum_{j} U^Q_{jk}=\sum_{k} U^Q_{jk}=1$ since $\tr(Q_j)=1$ and $\sum_j Q_j=d$. In other words, $U^Q$ would be a doubly stochastic matrix if all its entries were nonnegative. In general, they are  doubly quasistochastic. The above representation can  be generalized to arbitrary quantum channels characterized by completely positive trace-preserving maps from dimension $d$ to dimension $d$. For example, the channel $\Lambda$ can be represented by the matrix $\Lambda^Q$ with $\Lambda^Q_{jk}=\tr[Q_j \Lambda(Q_k)]/d$. The equality $\sum_{j} \Lambda^Q_{jk}=1$ still holds for all $k$; the other equality $\sum_{k} \Lambda^Q_{jk}=1$  holds for all $j$ iff the channel is unital.

It is known  that the appearance of negative probabilities is inevitable in   any QPR of quantum mechanics \cite{Spek08, FerrE08,FerrE09}.  However, little is known about the amount of negativity that is necessary. In this paper we shall fill this gap. The \emph{negativity} of a quantum state $\rho$ with respect to a NQPR $\{Q_j\}$ is defined as
\begin{equation}
N(\rho)=d\max\{0,-\min_j \mu_j(\rho)\}=\max\{0,-\min_j \tr(\rho Q_j)\}.
\end{equation}
Alternative definitions  (such as the sum of negative entries \cite{VeitMGE14}) have also been studied 
for discrete Wigner functions.  These measures are  useful in a number of  contexts, especially in quantum computation \cite{DamH11,VeitFGE12,VeitMGE14}.
The negativity $N(\{Q_j\})$ of the NQPR $\{Q_j\}$  is defined as the maximum of $N(\rho)$ over all quantum states, that is,
\begin{equation}
N(\{Q_j\})= \Bigl|\min_j\lambda_{\min}(Q_j)\Bigr|,
\end{equation}
where $ \lambda_{\min}(Q_j)$ is the minimal eigenvalue of $Q_j$, which must be negative given the constraints $\tr(Q_j)=1$ and $\tr(Q_j^2)=d$. 
Note that orthonormal operator bases cannot consist of only positive operators \cite{ApplDF14,Spek08,FerrE08}.

The nonclassical features of a QPR are not only reflected in the negative probabilities but also in the negative entries in the representation of unitary transformations and quantum channels. The \emph{channel negativity} of a  channel $\Lambda$ with respect to a NQPR $\{Q_j\}$ is defined as
\begin{equation}
N_\rmC(\Lambda)=\max\bigl\{0,-d\min_{jk}\Lambda^Q_{jk}\bigr\}.
\end{equation}
When $\Lambda$ denotes conjugation by the unitary $U$, we get the \emph{unitary negativity}
\begin{equation}\label{eq:NegU0}
N_\rmU(U)=-d\min_{jk} U^Q_{jk}=d\Bigl|\min_{jk} U^Q_{jk}\Bigr|.
\end{equation}
Note that $\min_{jk} U^Q_{jk}\leq 0$ since $U^Q$ is orthogonal and $\sum_{j} U^Q_{jk}=\sum_{k} U^Q_{jk}=1$. 
The channel negativity $N_\rmC(\{Q_j\})$ of the  representation $\{Q_j\}$  is  the maximum of $N_\rmC(\Lambda)$ over all quantum channels; the unitary negativity $N_\rmU(\{Q_j\})$ is the maximum of $N_\rmU(U)$ over all unitaries. Let $\lambda_j$ be the vector of eigenvalues of $Q_j$; then we have
\begin{align}
N_\rmU(\{Q_j\})&= \Bigl|\min_{j,k}\lambda_j^\uparrow \cdot \lambda_k^\downarrow\Bigr|,\label{eq:NegU}\\
N_\rmC(\{Q_j\})&= \max_{j,k}\frac{\lambda_{j}^{\max}(\|\lambda_k\|_1-1)- \lambda_{j}^{\min}(\|\lambda_k\|_1+1)}{2}.  \label{eq:NegC}
\end{align}
Here $\lambda_j^\uparrow$ ($\lambda_j^\downarrow$) is the vector of eigenvalues of $Q_j$ arranged in nondecreasing (nonincreasing) order, $\lambda_{j}^{\max}=\max_r \lambda_{j,r}$,  $\lambda_{j}^{\min}=\min_r \lambda_{j,r}$, and $\|\lambda_k\|_1=\sum_{r=1}^d |\lambda_{k,r}|=\|Q_k\|_1$. \Eref{eq:NegU} follows from \eref{eq:NegU0}
and the tight inequality $\tr(Q_jUQ_kU^\dag)\geq\lambda_j^\uparrow \cdot \lambda_k^\downarrow$ (see page 341 of \rcite{MarsOA11book}). \Eref{eq:NegC} is derived in the supplementary material.

Before presenting our main results, it is instructive to  compute the negativity of typical NQPRs. The Wootters discrete Wigner function \cite{Woot87}  is determined by the basis of phase point operators $\{A_j\}$ (corresponding to $\{Q_j\}$ in the above discussion), where $j$ may represent multiple indices. In dimension 2,
\begin{equation}\label{eq:Woot87}
A_{j_1j_2}=\frac{1}{2}\bigl[1+(-1)^{j_1}\sigma_z+(-1)^{j_2}\sigma_x+(-1)^{j_1+j_2}\sigma_y\bigr]
\end{equation}
for $j_1,j_2=0,1$, so  the negativity is equal to $(\sqrt{3}-1)/2$.  Interestingly, the four states achieving the maximal negativity happen to be magic states \cite{BravK05}, which play a key role in quantum computation.  In each odd prime dimension $p$, the Wootters discrete Wigner function is uniquely determined by Clifford covariance \cite{Gros06,Zhu16P}; each phase point operator has two distinct eigenvalues $\pm1$ with multiplicities $(p\pm1)/2$. In general, the phase point operators are tensor products of those phase point operators in prime dimensions \cite{Woot87,Gros06,Zhu16P}. Let $n$  be the largest exponent such that $2^n$ divides the dimension~$d$. Then the negativity, unitary negativity, and channel negativity read
\begin{align}
N&=\begin{cases}
2^{1-n}(\sqrt{3}+1)^{n-2}&  d=2^n,\\
2^{-n}(\sqrt{3}+1)^{n}& d\neq2^n;
\end{cases}\\
N_\rmU&=d-2^{|n-1|};\\
N_{\rmC}&=\begin{cases}
2^{-n}(\sqrt{3}+1)^{n-1}(3^{(n+1)/2}-1)& d=2^n, \\
4^{-n}(\sqrt{3}+3)^{n}d&  d\neq2^n.
\end{cases}
\end{align}
These results reflect a significant distinction between discrete Wigner functions in even and odd dimensions. Incidentally, the Wigner function in the continuous scenario is bounded from below by $-1/(\pi\hbar)$ \cite{Roye77}.

To construct NQPRs with smaller negativity,   recall that a   SIC $\{\Pi_j\}$ in dimension $d$ is composed of $d^2$ rank-1 projectors with equal pairwise fidelity,   $\tr(\Pi_j\Pi_k)=(d\delta_{jk}+1)/(d+1)$ \cite{Zaun11,ReneBSC04, Appl05, ScotG10, ApplFZ15G}.
This definition implies the
normalization condition $\sum_j\Pi_j=d$ automatically. 
Given  a SIC $\{\Pi_j\}$, we can construct two  NQPRs $\{Q_j^+\}$ and $\{Q_j^-\}$,
\begin{align}
Q_j^\pm&=\mp\sqrt{d+1}\Pi_j+\frac{1}{d}(1\pm \sqrt{d+1}). \label{eq:SICframe} 
\end{align}
When $d=2$, the two NQPRs are equivalent and both of them are equivalent to the Wootters discrete Wigner function.  Here two NQPRs $\{Q_j\}$ and $\{Q_j'\}$ are equivalent if there is a unitary transformation $U$ such that $U Q_jU^\dag=Q'_{\sigma(j)}$ for some permutation $\sigma$.
 In general, the above representations are intimately connected to the representation of quantum states in terms of SIC probabilities, which have been emphasized in quantum Bayesianism \cite{Fuch04,Fuch10,ApplEF11, FuchS13} for their heuristic value in making the Born rule look similar to the classical Law of Total Probability \cite{Fuch14}.  In addition, the underlying operator bases are useful for the study of the Lie and Jordan algebraic structures underlying quantum theory \cite{ApplFZ15G}.  Here they are interesting because of their key role in understanding the minimal negativity in NQPRs, which as a bonus provides for the first time a quantitative argument for the Born rule representation sought in quantum Bayesianism.

The negativity, unitary negativity, and channel negativity  of the  NQPRs $\{Q_j^+\}$ and $\{Q_j^-\}$ read
\begin{align}
N^+&=\frac{(d-1)\sqrt{d+1}-1}{d}, \quad N^-=\frac{\sqrt{d+1}-1}{d},  \\
N_\rmU^\pm&=1,\quad
N_\rmC^{\pm}=\frac{d^2-2\pm (d-2)\sqrt{d+1}}{d}.
\end{align}
All three  measures $N^-, N_\rmU^-, N_\rmC^-$ (especially the first two) of $\{Q_j^-\}$ are smaller than the corresponding values for the Wootters discrete Wigner function when $d>2$. What is not so obvious  is that $N^-, N_\rmU^-, N_\rmC^-$ are  the lower bounds for  negativity, unitary negativity, and channel negativity.



\begin{theorem}\label{thm:NegBound}
 Any NQPR $\{Q_j\}$ in dimension $d$ satisfies $N^-\leq N(\{Q_j\})\leq N^+$.
The lower bound is saturated iff $\{Q_j\}$ has the form $\{Q_j^-\}$  with  $\{\Pi_j\}$ being a SIC. If  $\{Q_j\}$ is group covariant, then the upper bound is saturated iff $\{Q_j\}$ has the form $\{Q_j^+\}$ with  $\{\Pi_j\}$ being a SIC.
\end{theorem}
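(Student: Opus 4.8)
The plan is to treat the two bounds separately, reducing each to an elementary spectral inequality for the individual frame operators and then extracting the equality conditions from the cases of equality in those inequalities. Throughout I use only the defining relations of a normal frame, namely $\tr(Q_j)=1$, $\tr(Q_jQ_k)=d\delta_{jk}$ (so in particular $\tr(Q_j^2)=d$), and $\sum_j Q_j=d$, together with $N(\{Q_j\})=\max_j\bigl(-\lambda_{\min}(Q_j)\bigr)$.

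For the upper bound I would fix a single operator $Q_j$ and bound its least eigenvalue using only $\tr(Q_j)=1$ and $\tr(Q_j^2)=d$. Writing $\lambda:=\lambda_{\min}(Q_j)$, the remaining $d-1$ eigenvalues sum to $1-\lambda$ and have squared sum $d-\lambda^2$, so Cauchy--Schwarz gives $d-\lambda^2\ge(1-\lambda)^2/(d-1)$. This is a quadratic inequality in $\lambda$ whose smaller root is exactly $-N^+$; hence $\lambda_{\min}(Q_j)\ge -N^+$ for every $j$, and therefore $N(\{Q_j\})\le N^+$. Equality for a given $j$ forces equality in Cauchy--Schwarz, i.e.\ the $d-1$ non-minimal eigenvalues coincide, pinning the spectrum to $\{-N^+,c,\dots,c\}$ with $c=(1+\sqrt{d+1})/d$. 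This is precisely the spectrum of $Q_j^+$, so such a $Q_j$ equals $c-\sqrt{d+1}\,\Pi_j$ for the rank-one projector $\Pi_j$ onto its minimal eigenvector.

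For the lower bound I would use a single clean trick. Set $m:=N(\{Q_j\})$, so that $\lambda_{\min}(Q_j)\ge -m$ for all $j$ and the shifted operators $R_j:=Q_j+m$ (i.e.\ $Q_j$ plus $m$ times the identity) are positive semidefinite. A short computation gives the $j$-independent values $\tr(R_j)=1+md$ and $\tr(R_j^2)=\tr(Q_j^2)+2m\tr(Q_j)+m^2d=d+2m+m^2d$. For any positive semidefinite operator one has $\tr(R_j^2)\le(\tr R_j)^2$, with equality iff $R_j$ has rank one. Substituting the two values and dividing by the positive factor $d-1$ reduces this inequality to $m^2d+2m-1\ge0$, i.e.\ $m\ge N^-$, which is the claimed bound.

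Finally I would read off the equality cases. Because $\tr(R_j^2)$ and $\tr(R_j)$ are the same for every index, saturation $m=N^-$ forces $\tr(R_j^2)=(\tr R_j)^2$ simultaneously for all $j$; hence every $R_j$ is a rank-one positive operator of trace $\sqrt{d+1}$, and writing $R_j=\sqrt{d+1}\,\Pi_j$ recovers $Q_j=Q_j^-$, while the orthogonality $\tr(Q_jQ_k)=d\delta_{jk}$ translates directly into $\tr(\Pi_j\Pi_k)=(d\delta_{jk}+1)/(d+1)$, the SIC condition. This is why the lower bound needs no symmetry hypothesis: the controlling inequality is saturated at every index at once. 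The upper bound behaves differently, since $N=N^+$ requires only one extremal $Q_j$, so a generic frame reaching it need not consist of $Q_j^+$'s; here group covariance does the work, as a transitively acting covariance group makes all $Q_j$ unitarily equivalent and hence cospectral, forcing them all into the $Q_j^+$ form, after which $\tr(Q_jQ_k)=0$ for $j\ne k$ again yields the SIC overlaps. The main obstacle I anticipate is not either inequality but the bookkeeping of the upper-bound equality case, namely confirming that covariance is genuinely needed and that cospectrality together with orthogonality leaves no freedom beyond a SIC.
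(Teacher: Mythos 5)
Your proof is correct and follows essentially the same route as the paper: both arguments reduce each bound to an extremization of $\lambda_{\min}(Q_j)$ under the two constraints $\tr(Q_j)=1$, $\tr(Q_j^2)=d$, then read off the forced spectra in the equality cases, convert orthogonality $\tr(Q_jQ_k)=d\delta_{jk}$ into the SIC overlap condition, and invoke group covariance only for the upper bound because a single extremal $Q_j$ already saturates it. The only difference is that you supply explicit justifications (Cauchy--Schwarz on the non-minimal eigenvalues, and the rank-one criterion $\tr(R^2)\leq(\tr R)^2$ for the shifted positive operators) for the extremal-spectrum step that the paper simply asserts.
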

The representation $\{Q_j\}$ is group covariant if its symmetry group acts transitively on its frame elements. Here the symmetry group  is the group of all unitary transformations that leaves the set of frame elements  invariant.
NQPRs saturating the lower bound in \thref{thm:NegBound} are called \emph{perfect}. In dimension 2, the lower and upper bounds in the theorem coincide, so all NQPRs are  perfect; actually, all of them are equivalent to the Wootters discrete Wigner function since all SICs are  equivalent.

\begin{proof}
Let $\lambda_{j,1}, \lambda_{j,2}, \ldots,\lambda_{j,d}$ be the eigenvalues of $Q_j$  in nonincreasing order. The constraints $\tr(Q_j)=1$ and $\tr(Q_j^2)=d$ amount to the equalities $\sum_r\lambda_{j,r}=1$ and $\sum_r\lambda_{j,r}^2=d$. When $\lambda_{j,d}$ is minimized (maximized), the first (last) $d-1$ components of $\lambda_j$ must be equal, so that
 $N^-\leq -\lambda_{j,d}\leq N^+$.
The lower bound is saturated iff
\begin{equation}\label{eq:LowerSpectrum}
\lambda_{j,1}=\frac{(d-1)\sqrt{d+1}+1}{d}, \quad
\lambda_{j,2}=\cdots=\lambda_{j,d}=-N^-;
\end{equation}
the upper bound is saturated iff
\begin{equation}\label{eq:UpperSpectrum}
\lambda_{j,1}=\cdots=\lambda_{j,d-1}=\frac{\sqrt{d+1}+1}{d}, \quad
 \lambda_{j,d}=-N^+.
\end{equation}
It follows that $N^-\leq N(\{Q_j\})\leq N^+$.

If the lower bound is saturated, then \eref{eq:LowerSpectrum} is satisfied for all $j$. So $Q_j$ have the form $Q_j^-$ in \eref{eq:SICframe} with $\Pi_j$ rank-1 projectors. The requirement $\tr(Q_jQ_k)=d\delta_{jk}$ then implies that $\tr(\Pi_j\Pi_k)=(d\delta_{jk}+1)/(d+1)$; that is, $\{\Pi_j\}$ is a SIC.
To saturate the upper bound, it suffices to have one $Q_j$ possess the spectrum in \eref{eq:UpperSpectrum}; such operator bases can be constructed for any dimension~$d$. On the other hand, if  all  $Q_j$ have the same minimal eigenvalue, which holds if $\{Q_j\}$ is group covariant, then all $Q_j$  have the form  $Q_j^+$ in \eref{eq:SICframe}, with $\Pi_j$ forming a SIC.
\end{proof}

More is known about NQPRs with maximal negativity.
 \begin{theorem}\label{thm:NegBound2}
 	In any NQPR $\{Q_j\}$ in dimension $d$,  the number of states with negativity  $N^+$ is at most $d^2$; the bound is saturated  iff $\{Q_j\}$ has the form $\{Q_j^+\}$  with  $\{\Pi_j\}$ being a SIC.
 \end{theorem}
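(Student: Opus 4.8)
The plan is to pin down exactly which states attain negativity $N^+$ and to show that each such state is forced to be the unique minimal eigenvector of a frame operator whose spectrum is the extremal one in \eref{eq:UpperSpectrum}. First I would note that a state $\rho$ satisfies $N(\rho)=N^+$ precisely when $\tr(\rho Q_j)=-N^+$ for some $j$. Since $\tr(\rho Q_j)\ge\lambda_{\min}(Q_j)\ge -N^+$, where the second inequality is the upper bound established in \thref{thm:NegBound}, equality forces $\lambda_{\min}(Q_j)=-N^+$ and $\rho$ to be supported on the corresponding minimal eigenspace of $Q_j$.

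The crucial structural input is that the extremal spectrum \eref{eq:UpperSpectrum} is the unique spectrum attaining $\lambda_{\min}=-N^+$, with the minimal eigenvalue simple. This follows from the same constrained-optimization argument used in \thref{thm:NegBound}: under $\sum_r\lambda_{j,r}=1$ and $\sum_r\lambda_{j,r}^2=d$, the minimum of $\lambda_{j,d}$ is attained only when $\lambda_{j,1}=\cdots=\lambda_{j,d-1}$, and these common values $(\sqrt{d+1}+1)/d$ are strictly positive, hence strictly above $-N^+$. Therefore every frame operator with $\lambda_{\min}=-N^+$ has a one-dimensional minimal eigenspace and contributes exactly one pure state, namely the associated minimal eigenvector $|\psi_j\rangle$. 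In particular, any state with negativity $N^+$ is automatically pure.

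Counting is then immediate: the number of states with negativity $N^+$ equals the number of distinct minimal eigenvectors among the frame operators of extremal spectrum, which is at most the number of such operators, hence at most the total number $d^2$ of frame elements. For the bound to be saturated, all $d^2$ operators must have the extremal spectrum and their minimal eigenvectors must be pairwise distinct. By the computation in \thref{thm:NegBound}, a $Q_j$ with spectrum \eref{eq:UpperSpectrum} is precisely the $Q_j^+=-\sqrt{d+1}\,\Pi_j+(1+\sqrt{d+1})/d$ of \eref{eq:SICframe} with $\Pi_j=|\psi_j\rangle\langle\psi_j|$ rank one; imposing the self-duality constraint $\tr(Q_jQ_k)=d\delta_{jk}$ then reduces to $\tr(\Pi_j\Pi_k)=(d\delta_{jk}+1)/(d+1)$, so $\{\Pi_j\}$ is a SIC. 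Conversely, for the representation $\{Q_j^+\}$ built from any SIC the $\Pi_j$ are pairwise distinct, so the $d^2$ minimal eigenvectors are distinct and all $d^2$ states are realized.

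I expect the main obstacle to be the simplicity (one-dimensionality) of the minimal eigenvalue at the extremal spectrum, since this is what guarantees that each extremal frame operator contributes a single state rather than a continuum; fortunately it is already encoded in the rigidity of the optimization in \thref{thm:NegBound}, where the leading $d-1$ eigenvalues are forced equal. The remaining steps, namely the pigeonhole count and the passage from "all operators extremal" to the SIC condition, are then direct consequences of \thref{thm:NegBound} and its proof.
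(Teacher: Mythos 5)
Your proof is correct and follows essentially the same route as the paper, which only sketches the argument by noting that a state with $\tr(\rho Q_j)=-N^+$ forces $Q_j$ to have the extremal spectrum of \eref{eq:UpperSpectrum} and $\rho$ to be its unique minimal eigenstate; you have simply filled in the pigeonhole count and the reduction of $\tr(Q_jQ_k)=d\delta_{jk}$ to the SIC condition, both of which are sound.
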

 Here $N^+$ is the maximal negativity over all quantum states and all NQPRs.
 The proof is  similar to the proof of \thref{thm:NegBound}. Note that if a quantum state $\rho$ satisfies $\tr(\rho Q_j)=-N^+$, then $Q_j$ has  the spectrum in \eref{eq:UpperSpectrum}, and $\rho$ is the unique eigenstate of $Q_j$ with the minimal eigenvalue.

\begin{theorem}\label{thm:NegU}
	The unitary negativity of any NQPR $\{Q_j\}$ in dimension $d$ satisfies $1\leq N_\rmU(\{Q_j\})\leq d- (2/d)$.
	The lower bound is saturated iff $\{Q_j\}$ has the form $\{Q_j^-\}$ or $\{Q_j^+\}$   with  $\{\Pi_j\}$ being a SIC.
\end{theorem}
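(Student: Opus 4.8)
The plan is to work directly from the spectral formula \eref{eq:NegU}, so that $N_\rmU(\{Q_j\})=\bigl|\min_{j,k}\lambda_j^\uparrow\cdot\lambda_k^\downarrow\bigr|$ and the whole problem reduces to estimating the bilinear form $\lambda_j^\uparrow\cdot\lambda_k^\downarrow$ over sorted eigenvalue vectors subject only to $\sum_r\lambda_{j,r}=1$ and $\sum_r\lambda_{j,r}^2=d$. First I would center each spectrum: setting $u_{j,r}=\lambda_{j,r}-1/d$, these constraints become $\sum_r u_{j,r}=0$ and $\|u_j\|^2=(d^2-1)/d=:T$, and a one-line computation gives $\lambda_j^\uparrow\cdot\lambda_k^\downarrow=u_j^\uparrow\cdot u_k^\downarrow+1/d$. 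It then suffices to bound $u_j^\uparrow\cdot u_k^\downarrow$ for centered vectors of fixed norm $\sqrt{T}$ and zero mean.

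For the upper bound I would apply Cauchy--Schwarz: $u_j^\uparrow\cdot u_k^\downarrow\geq-\|u_j\|\,\|u_k\|=-T$, so $\lambda_j^\uparrow\cdot\lambda_k^\downarrow\geq-T+1/d=-(d-2/d)$ for every pair, giving $N_\rmU(\{Q_j\})\leq d-2/d$. Equality requires the two centered spectra to be antiparallel, $u_j^\uparrow=-u_k^\downarrow$, which I will reuse below. For the lower bound it is enough to look at the diagonal pairs $j=k$. Writing $p_j=\lambda_j^{\max}-1/d$ and $n_j=1/d-\lambda_j^{\min}$ for the extreme centered eigenvalues, I split $u_j^\uparrow\cdot u_j^\downarrow=-p_jn_j+\sum_{r=2}^{d}u_{j,(r)}^\uparrow u_{j,(r)}^\downarrow$, where the $d-1$ middle terms pair an ascending sequence (summing to $n_j$) against a descending one (summing to $-p_j$); Chebyshev's sum inequality bounds them by $-p_jn_j/(d-1)$, so $u_j^\uparrow\cdot u_j^\downarrow\leq-\tfrac{d}{d-1}p_jn_j$. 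The remaining ingredient is the elementary bound $p_jn_j\geq T/d$, obtained by summing $(p_j-u_{j,r})(u_{j,r}+n_j)\geq0$ over $r$ and using $\sum_r u_{j,r}=0$. Combining, $u_j^\uparrow\cdot u_j^\downarrow\leq-T/(d-1)=-(d+1)/d$, i.e.\ $\lambda_j^\uparrow\cdot\lambda_j^\downarrow\leq-1$ for every $j$, whence $N_\rmU(\{Q_j\})\geq1$.

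I expect the saturation analysis to be the main obstacle. If $N_\rmU(\{Q_j\})=1$, then since every diagonal term already satisfies $\lambda_j^\uparrow\cdot\lambda_j^\downarrow\leq-1$, all of them must equal $-1$, which forces equality simultaneously in the estimate $p_jn_j\geq T/d$ and in Chebyshev's inequality. The first equality makes each $Q_j$ two-valued (every eigenvalue equals $\lambda_j^{\max}$ or $\lambda_j^{\min}$), and the Chebyshev equality forces one of the two values to be nondegenerate; together these single out precisely the spectra in \eref{eq:LowerSpectrum} and \eref{eq:UpperSpectrum}, so each $Q_j$ is of $Q_j^-$-type or $Q_j^+$-type. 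To rule out mixtures I would use the antiparallel case identified above: if some $Q_a$ has $Q^+$-type spectrum and some $Q_b$ has $Q^-$-type spectrum, their centered spectra are antiparallel, so $\lambda_a^\uparrow\cdot\lambda_b^\downarrow=-(d-2/d)<-1$ for $d>2$, contradicting $N_\rmU(\{Q_j\})=1$. Hence all the $Q_j$ share a single type.

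Once a uniform type is established, writing each $Q_j$ in the form \eref{eq:SICframe} with $\Pi_j$ the rank-$1$ projector onto its nondegenerate eigenspace, the orthonormality $\tr(Q_jQ_k)=d\delta_{jk}$ translates into $\tr(\Pi_j\Pi_k)=(d\delta_{jk}+1)/(d+1)$, so $\{\Pi_j\}$ is a SIC---exactly the step used in the proof of \thref{thm:NegBound}. The converse, that $\{Q_j^\pm\}$ built from a SIC attain $N_\rmU=1$, is the computation recorded just before the theorem, and $d=2$ is trivial since the two bounds then coincide. The one point demanding care is checking that equality in Chebyshev's inequality indeed forces the \emph{single}-outlier spectra rather than an arbitrary two-valued spectrum (e.g.\ a balanced $\pm$ split would give a strictly smaller diagonal value and hence $N_\rmU>1$); this is where the distinction between generic two-valued operators and the $Q_j^\pm$ of \eref{eq:SICframe} is actually pinned down.
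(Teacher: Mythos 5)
Your argument is correct, and for the harder half of the statement it is genuinely more self-contained than the paper's. The upper bound coincides with the paper's proof verbatim: center the spectra ($\eta_j=\lambda_j-1/d$ in the paper's notation, your $u_j$), note $\|u_j\|^2=(d^2-1)/d$, and apply Cauchy--Schwarz. For the lower bound and its equality condition, however, the paper simply invokes Theorem~7 of \rcite{ApplFZ15G}, whereas you supply a direct proof: restricting to diagonal pairs, peeling off the extreme product $-p_jn_j$, controlling the remaining $d-1$ oppositely sorted terms by Chebyshev's sum inequality, and closing with the elementary estimate $p_jn_j\geq T/d$ obtained from $\sum_r(p_j-u_{j,r})(u_{j,r}+n_j)\geq0$. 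The numbers check out ($T/(d-1)=(d+1)/d$, hence $\lambda_j^\uparrow\cdot\lambda_j^\downarrow\leq-1$), and your equality analysis is sound: saturation of $p_jn_j\geq T/d$ forces a two-valued spectrum, the Chebyshev equality condition (one of the two truncated sorted sequences constant) forces one eigenvalue to be nondegenerate, and these together pin down exactly the spectra \eqref{eq:LowerSpectrum} and \eqref{eq:UpperSpectrum}; the antiparallelism observation correctly excludes mixing $Q^+$-type and $Q^-$-type elements for $d>2$, after which orthonormality yields the SIC condition exactly as in the proof of \thref{thm:NegBound}. What your route buys is independence from the external reference at the cost of a slightly longer argument; what the paper's route buys is brevity. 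The only point worth making explicit in a write-up is that $N_\rmU$ is computed from ordered pairs $(j,k)$ via \eqref{eq:NegU} with the rearrangement bound being attainable by some unitary, which the main text establishes before the theorem, so your reduction to eigenvalue vectors is legitimate.
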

Interestingly, the lower bound and the equality condition do not change if $ N_\rmU(\{Q_j\})$ is replaced by the minimum of $N_\rmC(\Lambda)$ over unital channels $\Lambda$. 
\begin{proof}
	The lower bound and the equality condition follows from Theorem~7 in \rcite{ApplFZ15G}. Let $\eta_j=\lambda_j-(1/d)$, then  the upper bound follows from the  equation
	\begin{equation}
	\lambda_j^\uparrow \cdot \lambda_k^\downarrow=\eta_j^\uparrow \cdot \eta_k^\downarrow+\frac{1}{d}\geq-\|\eta_j\|\|\eta_k\|+\frac{1}{d}=-d+\frac{2}{d}.
	\end{equation}
\end{proof}

The next theorem is proved in supplementary material. 
\begin{theorem}\label{thm:NegC}
	The channel negativity of any NQPR $\{Q_j\}$ in dimension $d$ satisfies $ N_\rmC(\{Q_j\})\geq N_\rmC^-$.
	The  bound is saturated iff $\{Q_j\}$ is perfect.
\end{theorem}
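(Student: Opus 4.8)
The plan is to collapse the channel negativity, which \eref{eq:NegC} writes as a maximum over all pairs $(j,k)$, onto a single-operator quantity and then to solve the resulting one-spectrum optimization exactly. First I would keep only the diagonal terms $k=j$: since each such term is one of the terms in the maximum defining $N_\rmC(\{Q_j\})$, one has $N_\rmC(\{Q_j\})\geq\max_j f(\lambda_j)$, where $f(\lambda)=\tfrac12[\lambda^{\max}(\|\lambda\|_1-1)-\lambda^{\min}(\|\lambda\|_1+1)]$ for the spectrum $\lambda$ of a single $Q_j$. Writing $N=\tfrac12(\|\lambda\|_1-1)$ for the total negativity (the sum of absolute values of the negative eigenvalues) and using $\sum_r\lambda_r=1$, so that $\|\lambda\|_1+1=2(1+N)$, this reduces to the compact form $f=(M-m)N-m=rN-m$, with $M=\lambda^{\max}>0$, $m=\lambda^{\min}<0$ (at least one eigenvalue is negative because $\sum_r\lambda_r^2=d>1=(\sum_r\lambda_r)^2$), and range $r=M-m$. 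The goal becomes $f\geq N_\rmC^-$ for every admissible spectrum, with equality only for \eref{eq:LowerSpectrum}.

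The core is a two-step bound using only $\sum_r\lambda_r=1$ and $\sum_r\lambda_r^2=d$. Step one bounds $f$ from below by $d-(M+m)$: from $\lambda_r^2\leq M\lambda_r$ on the positive eigenvalues and $\lambda_r^2\leq -m|\lambda_r|$ on the negative ones, summation gives $d=\sum_r\lambda_r^2\leq M+rN$, hence $rN\geq d-M$ and $f=rN-m\geq d-M-m$. Step two bounds $s:=M+m$ from above. Here I would isolate the $d-2$ interior eigenvalues, which all lie in $[m,M]$ so that $\sum_{\mathrm{int}}(\mu-M)(\mu-m)\leq 0$, while Cauchy--Schwarz gives $\sum_{\mathrm{int}}\mu^2\geq(\sum_{\mathrm{int}}\mu)^2/(d-2)$. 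Feeding the fixed interior sum $1-s$ and squared-sum $d-M^2-m^2$ into these two inequalities and eliminating the product $Mm$ between them collapses everything to a single quadratic, $s^2-\tfrac{4}{d}s-(d-3)\leq 0$. The decisive simplification is the factorization $d^3-3d^2+4=(d-2)^2(d+1)$, which turns the larger root into exactly $s\leq\frac{(d-2)\sqrt{d+1}+2}{d}$. Combining the two steps gives $f\geq d-s\geq N_\rmC^-$, and therefore $N_\rmC(\{Q_j\})\geq N_\rmC^-$.

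For the equality condition I would trace back when both steps are tight. Step one is tight iff every positive eigenvalue equals $M$ (or $0$) and every negative eigenvalue equals $m$; step two is tight at the upper root iff the interior eigenvalues are all equal and sit at an endpoint of $[m,M]$. Since the two admissible endpoints reproduce the two roots of the quadratic, the upper root selects the spectrum $(M,m,\dots,m)$, which upon imposing $\sum_r\lambda_r=1$ and $\sum_r\lambda_r^2=d$ is forced to be precisely \eref{eq:LowerSpectrum}. Then $N_\rmC(\{Q_j\})=N_\rmC^-$ compels $\max_jf(\lambda_j)=N_\rmC^-$, hence $f(\lambda_j)=N_\rmC^-$ for all $j$, so every $Q_j$ carries this spectrum; the argument already used in the proof of \thref{thm:NegBound} identifies $\{Q_j\}$ with $\{Q_j^-\}$ for a SIC, i.e.\ $\{Q_j\}$ is perfect. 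The converse is the direct computation of $N_\rmC^-$ quoted before the theorem. The case $d=2$ is degenerate (no interior eigenvalues, a unique spectrum) and is consistent because the bound and the minus spectrum coincide there.

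The step I expect to be the main obstacle is the upper bound on $s=M+m$ in step two, together with its equality analysis. This is not a per-eigenvalue estimate but a genuine constrained optimization of $\lambda^{\max}+\lambda^{\min}$ over the intersection of a sphere and a hyperplane, and the clean outcome hinges on the non-obvious cancellation that removes $Mm$ and on the factorization $(d-2)^2(d+1)$. The subtle point in the equality analysis is that the minus spectrum \eref{eq:LowerSpectrum} and the upper spectrum \eref{eq:UpperSpectrum} both saturate the intermediate inequalities but correspond to the two different roots of the quadratic, so one must verify carefully that only the minus spectrum attains the maximal $s$; conflating the two roots would wrongly admit the upper-negativity spectrum.
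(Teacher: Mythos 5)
Your proposal is correct, and while it opens the same way as the paper's proof --- restricting the maximum in \eref{eq:NegC} to the diagonal pairs $k=j$ and thereby reducing everything to a lower bound on $f(\lambda)=\tfrac12[\lambda^{\max}(\|\lambda\|_1-1)-\lambda^{\min}(\|\lambda\|_1+1)]$ for a single admissible spectrum --- it proves that single-spectrum bound by a genuinely different route. The paper's Lemma~S1 first solves the problem for vectors whose nonzero entries take only two values $a>0$ and $-b<0$ with multiplicities $m,n$, obtaining an explicit $f(m,n)$ whose partial derivatives are shown to have fixed signs so that the minimum sits at $(m,n)=(1,d-1)$, and then reduces the general vector to this two-valued family via effective parameters $a,m',b,n'$. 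You instead split the bound into two purely algebraic steps: $f\geq d-(M+m)$ from the per-eigenvalue inequalities $\lambda_r^2\leq M\lambda_r$ and $\lambda_r^2\leq|m||\lambda_r|$, and $M+m\leq[(d-2)\sqrt{d+1}+2]/d$ from combining $\sum_{\mathrm{int}}(\mu-M)(\mu-m)\leq0$ with Cauchy--Schwarz on the $d-2$ interior eigenvalues, which after eliminating $Mm$ yields the quadratic $s^2-\tfrac4d s-(d-3)\leq0$ with discriminant $(d-2)^2(d+1)$; I have checked this algebra and it is correct. Your equality analysis is also sound, including the point you rightly flag: the two roots of the quadratic correspond to the spectra \eqref{eq:LowerSpectrum} and \eqref{eq:UpperSpectrum} respectively, and only the all-$m$ interior configuration is compatible with the upper root, so saturation forces every $Q_j$ to have the spectrum \eqref{eq:LowerSpectrum} and hence, by orthonormality, $\{\Pi_j\}$ to be a SIC. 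What your route buys is a calculus-free, self-contained argument with a cleaner equality discussion; what it costs is that the intermediate bound $f\geq d-(M+m)$ is a priori lossy, so one must verify (as you do) that it happens to be tight exactly at the minimizing spectrum. Both arguments handle $d=2$ as a degenerate special case and both defer the converse (that perfect NQPRs actually attain $N_\rmC^-$) to the direct computation quoted in the main text, which is immediate since all frame elements of a perfect NQPR share one spectrum.
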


In view of \thsref{thm:NegBound}, \ref{thm:NegU},  and \ref{thm:NegC}, it is interesting to clarify  states, unitary transformations, and quantum channels with a nonnegative representation. For the Wootters discrete Wigner function in an odd dimension, such pure states and unitary transformations happen to be stabilizer states and Clifford unitaries \cite{Gros06}, which play a key role in quantum computation.  In general, it is not easy to determine  states and quantum channels with a nonnegative representation, but there is a simple description of such unitary transformations.

\begin{theorem}\label{thm:PositiveSym}
A unitary transformation has a nonnegative representation in a  NQPR  $\{Q_j\}$ iff it belongs to the symmetry group of $\{Q_j\}$.
 \end{theorem}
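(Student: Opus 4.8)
The plan is to prove both implications, disposing of the trivial direction first and then reducing the substantive direction to a single structural fact about orthogonal matrices. Recall that $U^Q_{jk}=\tr(Q_jUQ_kU^\dag)/d$ defines a real orthogonal matrix whose rows and columns each sum to $1$, and that "nonnegative representation" means precisely that $U^Q_{jk}\geq 0$ for all $j,k$. For the easy direction, I would assume $U$ lies in the symmetry group, so that $UQ_kU^\dag=Q_{\sigma(k)}$ for some permutation $\sigma$; the orthogonality relation $\tr(Q_jQ_k)=d\delta_{jk}$ then gives $U^Q_{jk}=\tr(Q_jQ_{\sigma(k)})/d=\delta_{j,\sigma(k)}$, so $U^Q$ is literally a permutation matrix and in particular nonnegative.

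For the converse, the crux is the elementary but decisive observation that a real orthogonal matrix with only nonnegative entries must be a permutation matrix. I would argue as follows: orthonormality of the rows yields $\sum_k U^Q_{jk}U^Q_{lk}=0$ for $j\neq l$, and since every summand is nonnegative, each product $U^Q_{jk}U^Q_{lk}$ vanishes; hence every column has a nonzero entry in at most one row. Invertibility of $U^Q$ then forces exactly one nonzero entry per column, and therefore per row, while the unit-norm condition on each row combined with nonnegativity pins that entry to $1$. Thus $U^Q$ is a permutation matrix, say $U^Q_{jk}=\delta_{j,\sigma(k)}$.

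The final step is to transport this combinatorial structure back to the operators. Because $\{Q_j/\sqrt{d}\}$ is an orthonormal basis of the operator space, the expansion coefficients of $UQ_kU^\dag$ are read off directly from $U^Q$:
\begin{equation}
UQ_kU^\dag=\sum_j \frac{\tr(Q_jUQ_kU^\dag)}{d}\,Q_j=\sum_j U^Q_{jk}\,Q_j=\sum_j \delta_{j,\sigma(k)}\,Q_j=Q_{\sigma(k)}.
\end{equation}
Hence $U$ maps the frame elements onto themselves by the permutation $\sigma$, which is exactly the statement that $U$ belongs to the symmetry group of $\{Q_j\}$, completing the proof.

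I expect no serious obstacle here. The only genuinely nontrivial ingredient is the nonnegative-orthogonal-implies-permutation lemma, so that is the step I would present most carefully; everything else is routine bookkeeping with the self-dual frame and its expansion coefficients. The conceptual payoff is that the permutation structure forced on $U^Q$ by nonnegativity and orthogonality is rigid enough to reconstruct the full operator identity $UQ_kU^\dag=Q_{\sigma(k)}$.
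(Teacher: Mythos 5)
Your proof is correct and follows essentially the same route as the paper: both directions hinge on the observation that a nonnegative orthogonal matrix must be a permutation matrix, which the paper invokes tersely (via orthogonality plus stochasticity) and you prove explicitly. Your added detail — the column-by-column argument for the permutation lemma and the explicit expansion $UQ_kU^\dag=\sum_j U^Q_{jk}Q_j$ recovering $Q_{\sigma(k)}$ — only fills in steps the paper leaves implicit.
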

 The symmetry group  can be determined by an algorithm introduced by the author, which is originally designed for computing the symmetry group of a SIC \cite{Zhu12the}.

 \begin{proof}
 If $U$ belongs to the symmetry group of $\{Q_j\}$, then $U^Q$ is a permutation matrix. Conversely, if $U^Q$ has no negative entries, then $U^Q$ is both an orthogonal matrix and a stochastic matrix. So it is a  permutation matrix, and $U$ belongs to the symmetry group.
 \end{proof}

\Thsref{thm:NegBound} to \ref{thm:PositiveSym} elucidate fundamental features of NQPRs. 
\Thsref{thm:NegBound}, \ref{thm:NegU}, and  \ref{thm:NegC} also establish a one-to-one correspondence between  SICs and NQPRs of quantum mechanics with minimal negativity, namely, perfect QPRs. Based on the study of SICs \cite{Zaun11,ReneBSC04, Appl05, ScotG10, ApplFZ15G}, we can construct perfect QPRs at least in  dimensions~2 to~16, 19, 24, 28, 31, 35, 37, 43, 48, and with extremely high precision up to dimension 67 (the number of  known solutions is increasing continually). There is every reason to believe that such representations exist for any Hilbert space of finite dimension. Quite surprisingly, all perfect QPRs  constructed from known SICs  are group covariant and, barring one exception, are covariant with respect to the HW groups, although their definition does not in any way involve group symmetry. This conclusion can be proved rigorously in dimensions 2 (as mentioned before) and 3 according to \thref{thm:QuasiRep3d} below.

\begin{theorem}\label{thm:QuasiRep3d}
All perfect QPRs in dimension 3 are covariant with respect to the HW group. Each one is equivalent to exactly one representation of the form $\{Q_{jk}(t)\}_{j,k=0,1,2}$, with
\begin{equation}\label{eq:QuasiRep3d}
\begin{aligned}
Q_{jk}(t)&=2X^jZ^k |\psi(t)\rangle\langle \psi(t)| (X^{j} Z^{k})^\dag-\frac{1}{3},\\
|\psi(t)\rangle&\rep\frac{1}{\sqrt{2}}(0,1,-\rme^{\rmi t})^\rmT,\quad 0\leq t\leq \frac{\pi}{9},
\end{aligned}
\end{equation}
where the cyclic shift operator $X$  and the phase operator $Z\rep\diag(1,\rme^{2\pi\rmi/3},\rme^{4\pi\rmi/3})$  generate the HW group. 
\end{theorem}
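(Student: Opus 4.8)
The plan is to reduce the statement to the classification of symmetric informationally complete measurements (SICs) in dimension~$3$, exploiting the correspondence between perfect QPRs and SICs from \thref{thm:NegBound}. By that theorem a perfect QPR in dimension $d=3$ has the form $Q_j=2\Pi_j-1/3$ for a SIC $\{\Pi_j\}$, and conversely. The affine map $\Pi_j\mapsto 2\Pi_j-1/3$ is fixed and commutes with unitary conjugation, so two perfect QPRs are equivalent iff the underlying SICs are unitarily equivalent (up to relabeling), and the QPR is HW-covariant iff the SIC is, because $X^jZ^k Q_{00}(X^jZ^k)^\dag=2X^jZ^k\Pi_{00}(X^jZ^k)^\dag-1/3$. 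Hence it suffices to (i)~show that \emph{every} SIC in dimension~$3$ is covariant with respect to the HW group, and (ii)~parametrize the HW-covariant SICs up to equivalence and match the outcome to \eref{eq:QuasiRep3d}.

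The main obstacle is step~(i): a SIC is a priori only a set of nine equiangular rank-$1$ projectors and carries no group structure. I would establish HW-covariance using the special geometry of $d=3$, where the nine rays of any SIC form a Hesse configuration whose collineation group contains a translation subgroup $\mathbb{Z}_3\times\mathbb{Z}_3$ realized by the HW group; equivalently, one solves the defining overlap equations exhaustively and checks that every solution is a single HW orbit. This is a feature peculiar to $d=3$ (it fails in general) and is the heart of the theorem; I would invoke and reproduce the relevant part of the SIC classification \cite{Zaun11, ScotG10, ApplFZ15G}. This is also precisely why the rigorous conclusion is confined to dimensions $2$ and~$3$.

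Granting HW-covariance, each SIC is generated from a fiducial $|\psi\rangle$, which the Clifford group can bring to a standard form with a vanishing first component (possible in $d=3$). Imposing $|\langle\psi|Z|\psi\rangle|^2=1/(d+1)=1/4$ on a vector $(0,c_1,c_2)^\rmT$ then forces $|c_1|=|c_2|=1/\sqrt2$; fixing the overall phase and the residual diagonal ($Z$-conjugation) freedom yields exactly $|\psi(t)\rangle=\frac{1}{\sqrt2}(0,1,-\rme^{\rmi t})^\rmT$. A direct computation shows that for this vector all $d^2-1=8$ nontrivial HW overlaps automatically have modulus $1/2$, so the entire family consists of SIC fiducials and $\{Q_{jk}(t)\}$ in \eref{eq:QuasiRep3d} is HW-covariant by construction.

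It remains to pin down the fundamental domain and the uniqueness claim. First, $Z|\psi(t)\rangle\propto|\psi(t+2\pi/3)\rangle$, so $t$ and $t+2\pi/3$ label the same SIC, and the parity (Weyl inversion) operator $P\colon|n\rangle\mapsto|\!-\!n\rangle$ satisfies $P|\psi(t)\rangle\propto|\psi(-t)\rangle$, giving a unitary equivalence $t\sim -t$ and reducing the circle to $[0,\pi/3]$. I would then compute the action of the remaining Clifford generators (the Fourier transform and the quadratic phase gate) on $t$, composing with re-normalization to the standard form; these induce a further threefold identification that collapses $[0,\pi/3]$ to $0\le t\le\pi/9$, with the two endpoints giving the exceptional, more symmetric SICs and the interior the generic ones, so each perfect QPR is equivalent to exactly one representative. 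The delicate point in this last step is keeping track of which Clifford elements map the standard slice $\{\text{first component}=0\}$ back into itself modulo the HW group; the symmetry groups needed to complete and verify the bookkeeping can be obtained with the algorithm of \cite{Zhu12the}.
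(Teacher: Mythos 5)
Your proposal follows essentially the same route as the paper: reduce via \thref{thm:NegBound} to the classification of SICs in dimension 3, invoke the fact that all such SICs are HW covariant, and then use the known parametrization of HW-covariant fiducials up to equivalence with fundamental domain $0\le t\le\pi/9$. The paper simply cites the literature \cite{HughS16,Szol14,Zhu10,Appl05} for these two inputs, whereas you sketch how they are established, but the logical skeleton is identical and your sketch is consistent with the cited results.
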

\begin{proof}
According to \rscite{HughS16,Szol14}, all SICs in dimension~3 are covariant with respect to the HW group. According to \rcite{Zhu10} (see also \rcite{Appl05}), every HW covariant SIC in dimension 3 is equivalent to a unique SIC generated from the fiducial state $|\psi(t)\rangle$ with $0\leq t\leq \frac{\pi}{9}$. Now the theorem follows from \thref{thm:NegBound}.
\end{proof}
The SIC generated from the fiducial state $|\psi(0)\rangle$, known as the Hesse SIC \cite{Beng10, TabiA13,  Zhu15S}, deserves special attention. It is the only SIC
that is covariant with respect to the whole Clifford group \cite{Appl05,Zhu10,Zhu15S}. The corresponding NQPR defined in \thref{thm:QuasiRep3d} will be referred to as the Hesse  representation, which stands out  because of its exceptionally high symmetry. In dimension~3, the negativity of the Wootters discrete Wigner function saturates the upper bound in
\thref{thm:NegBound}, and the phase point operators have the form $Q_j^+$, with $\Pi_j$ forming the Hesse SIC. In addition, the Hesse SIC projectors happen to be the states with maximal negativity, which are   useful  in magic state quantum computation \cite{BravK05,VeitMGE14}.

Although all perfect NQPRs known so far are group covariant, their symmetry groups are quite restricted compared with the counterpart of  Wootters discrete Wigner functions.  In odd prime power dimensions, for example, the Wootters discrete Wigner functions are covariant with respect to the Clifford groups (of multipartite HW groups), which form 2-designs \cite{Gros06}. It turns out that the symmetry group of an operator basis  forms a 2-design iff it acts doubly transitively on basis operators, reminiscent of the symplectic geometry in classical phase space \cite{Zhu16P, Spek16}. In sharp contrast, there is only one perfect QPR that is covariant with respect to the Clifford group, namely, the  Hesse representation in dimension 3, according to \thref{thm:NegBound} and early results on SICs \cite{Zhu15S,Zhu16P}. In addition, there are only three
perfect QPRs whose symmetry groups form 2-designs, namely, the representations constructed from the SIC in dimension 2, the Hesse SIC, and the set of Hoggar lines, respectively \cite{Zhu15S,Zhu16P}.
Here we see an interesting tradeoff between negativity and symmetry in QPRs. Wootters discrete Wigner functions are distinguished by extremely high symmetry at the price of high negativity. By contrast, perfect QPRs are distinguished by the minimal negativity at the price of restricted symmetry. According to \thref{thm:PositiveSym}, the latter can only admit restricted unitary transformations with a nonnegative representation.

In summary we introduced the concept of  NQPRs, which include most discrete Wigner functions and   enjoy a number of merits compared with other representations. 
We then elucidate their fundamental features and set the stage for future exploration.
In particular, we   established a one-to-one correspondence between SICs and representations with minimal negativity, namely, perfect QPRs. Surprisingly, most of these representations are automatically covariant with respect to the HW groups. Furthermore, we   revealed an interesting tradeoff between negativity and symmetry in QPRs. 
Our study offers valuable insight on the distinction between quantum theory and classical probability theory in terms of negativity. As a bonus, it
 provides for the first time a quantitative argument for the Born rule representation sought in quantum Bayesianism.
Besides the foundational significance, our work also has ramifications  in practical applications, such as  quantum computation. 
 Finally, our work  prompts several interesting questions, for example, how is quantum theory distinguished from generalized probability theories in terms of negativity? 
The implications of our work for contextuality also deserve further study.

 The author is grateful to Ingemar Bengtsson, Christopher A. Fuchs, and Blake Stacey  for comments and suggestions. The author   acknowledges financial support
 from the Excellence Initiative of the German Federal and State Governments
 (ZUK81) and the DFG.

\bibliographystyle{apsrev4-1}

\bibliography{all_references}

\clearpage
\setcounter{equation}{0}
\setcounter{figure}{0}
\setcounter{table}{0}
\setcounter{theorem}{0}
\setcounter{lemma}{0}
\setcounter{remark}{0}

\makeatletter
\renewcommand{\theequation}{S\arabic{equation}}
\renewcommand{\thefigure}{S\arabic{figure}}
\renewcommand{\thetable}{S\arabic{table}}
\renewcommand{\thetheorem}{S\arabic{theorem}}
\renewcommand{\thelemma}{S\arabic{lemma}}
\renewcommand{\theremark}{S\arabic{remark}}


\onecolumngrid
\begin{center}
	\textbf{\large Supplementary material: Quasiprobability representations of quantum mechanics with minimal negativity}
\end{center}

In this supplementary material, we derive Eq.~(7) and prove  Theorem~4 in the main text, both of which are crucial to understanding the channel negativity of NQPRs. For completeness, we also derive an upper bound for the channel negativity.

\section{Derivation of Eq.~(7)}
Define $Q_{k,1}=(|Q_
k|+Q_k)/2$ and $Q_{k,2}=(|Q_k|-Q_k)/2$, where  $|Q_k|=\sqrt{Q_k^2}$ is the unique positive operator whose square is equal to $Q_k^2$. Then  $Q_{k,1}$ and $Q_{k,2}$ are positive operators with orthogonal supports and $Q_k=Q_{k,1}-Q_{k,2}$. 
\begin{align}\label{eq:NegCs1}
d\Lambda^Q_{jk}&=\tr[Q_j \Lambda(Q_k)]=\tr[Q_j \Lambda(Q_{k,1})]-\tr[Q_j \Lambda(Q_{k,2})]\geq \lambda_{j}^{\min}\tr[\Lambda(Q_{k,1})]-\lambda_{j}^{\max}\tr[\Lambda(Q_{k,2})]\nonumber\\
&=\lambda_{j}^{\min}\tr(Q_{k,1})-\lambda_{j}^{\max}\tr(Q_{k,2})
=\frac{1}{2}\bigl[\lambda_{j}^{\min}\tr(|Q_k|+Q_k)-\lambda_{j}^{\max}\tr(|Q_k|-Q_k)]\nonumber\\
&=\frac{1}{2}\bigl[\lambda_{j}^{\min}(\|\lambda_k\|_1+1)-\lambda_{j}^{\max}(\|\lambda_k\|_1-1)]. 
\end{align}
The lower bound for given $j,k$ can always be saturated by some channel. To see this, let $P_1$ be the projector onto the support of $Q_{k,1}$ and $P_2$ the projector onto the orthogonal complement. Let $\rho_{\max}$ and $\rho_{\min}$ be eigenstates of $Q_j$ with maximal and minimal eigenvalues. Define quantum channel $\Lambda_1$ on the support of $P_1$ and $\Lambda_2$ on the support of $P_2$ as follows, $\Lambda_1(\rho)=\rho_{\min}$ and $\Lambda_2(\rho)=\rho_{\max}$. Then the channel $\Lambda$ defined by the equation $\Lambda(\rho)=\Lambda_1(P_1 \rho P_1)+\Lambda_2(P_2 \rho P_2)$ saturates the lower bound in the above equation.

According to \eref{eq:NegCs1},
\begin{align}
N_\rmC(\{Q_j\})&=-d\min_{\Lambda,j,k} \Lambda^Q_{jk} \leq \frac{1}{2} \max_{j,k}\bigl[\lambda_{j}^{\max}(\|\lambda_k\|_1-1)- \lambda_{j}^{\min}(\|\lambda_k\|_1+1) \bigr].
\end{align} 
The upper bound can be saturated 
since the lower bound in \eref{eq:NegCs1} for any given pair $j,k$ can always be saturated by some channel. So we have 
\begin{align}\label{eq:NegCs3}
N_\rmC(\{Q_j\})&=\frac{1}{2} \max_{j,k}\bigl[\lambda_{j}^{\max}(\|\lambda_k\|_1-1)- \lambda_{j}^{\min}(\|\lambda_k\|_1+1) \bigr]=\frac{1}{2}\max_{j,k}\bigl[\lambda_{j}^{\max}(\|\lambda_k\|_1-1)+ |\lambda_{j}^{\min}|(\|\lambda_k\|_1+1)\bigr].
\end{align}

\section{Proof of Theorem~4}
\begin{proof}According to \eref{eq:NegCs3},
	\begin{align}
	N_\rmC(\{Q_j\})
	&\geq\frac{1}{2}\max_{j}\bigl[\lambda_{j}^{\max}(\|\lambda_j\|_1-1)+ |\lambda_{j}^{\min}|(\|\lambda_j\|_1+1)\bigr]\geq d-\sqrt{d+1}+\frac{2}{d}\sqrt{d+1}-\frac{2}{d}= N_\rmC^-,
	\end{align}
	where the second inequality follows from \lref{lem:NegC} below. If the lower bound is saturated, then $\lambda_j^\downarrow$ are equal to $\vec{v}^\downarrow$ in \eref{eq:v} for all $j$, where  $\vec{v}^\downarrow$ is the vector obtained by arranging the components of $\vec{v}$ in nonincreasing order. 
	So $Q_j$ have the form $Q_j^-$ in Eq.~(12) with $\Pi_j$ rank-1 projectors. The requirement $\tr(Q_jQ_k)=d\delta_{jk}$ then implies that $\tr(\Pi_j\Pi_k)=(d\delta_{jk}+1)/(d+1)$; in other words, $\{\Pi_j\}$ is a SIC and $\{Q_j\}$ is perfect. In this case, we indeed have $N_\rmC(\{Q_j\})=N_\rmC^-$.
\end{proof}

\begin{lemma}\label{lem:NegC}
	Suppose $\vec{v}=(v_1,v_2,\ldots,v_d)$ is a real vector in dimension $d$ satisfying  $\sum_j v_j=1$ and  $\sum_j v_j^2=d$. Let $v_{\max}=\max_j v_j$, $v_{\min}=\min_j v_j$,  $\|v\|_1=\sum_j |v_j|$.
	Then
	\begin{equation}\label{eq:ChannelBound}
	\frac{1}{2}\bigl[v_{\max} (\|v\|_1-1)+|v_{\min}| (\|v\|_1+1)\bigr]\geq d-\sqrt{d+1}+\frac{2}{d}\sqrt{d+1}-\frac{2}{d},
	\end{equation}
	and the lower bound is saturated iff 
	\begin{equation}\label{eq:v}
	v_1^\downarrow=\frac{(d-1)\sqrt{d+1}+1}{d}, \quad
	v_2^\downarrow=\cdots=v_d^\downarrow=\frac{1-\sqrt{d+1}}{d}.
	\end{equation}
\end{lemma}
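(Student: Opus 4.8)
The plan is to minimize the left-hand side, which I abbreviate $L(\vec v):=\frac{1}{2}[v_{\max}(\|v\|_1-1)+|v_{\min}|(\|v\|_1+1)]$, over the compact set defined by $\sum_j v_j=1$ and $\sum_j v_j^2=d$; a minimizer exists by continuity and compactness. First I would record two facts: the entries cannot all be equal (otherwise $\sum_j v_j^2=1/d\neq d$), and they cannot all be nonnegative (a nonnegative vector of unit sum has $\sum_j v_j^2\le(\sum_j v_j)^2=1<d$); hence $v_{\min}<0<v_{\max}$. Setting $S_+=\sum_{v_j>0}v_j$ and $S_-=\sum_{v_j<0}|v_j|$, the constraint $\sum_j v_j=1$ reads $S_+-S_-=1$, so $\|v\|_1\pm1=2S_{\pm}$ and $L(\vec v)=v_{\max}S_-+|v_{\min}|S_+$.

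The heart of the argument is a clean lower bound for $L$. For every $j$ one has $v_j^2\le v_{\max}v_j$ when $v_j\ge0$ and $v_j^2\le|v_{\min}|\,|v_j|$ when $v_j<0$; summing over $j$ gives $d=\sum_j v_j^2\le v_{\max}S_++|v_{\min}|S_-$, with equality precisely when every entry lies in $\{v_{\min},0,v_{\max}\}$. Because $(v_{\max}S_++|v_{\min}|S_-)-L(\vec v)=(v_{\max}+v_{\min})(S_+-S_-)=v_{\max}+v_{\min}$, this inequality rearranges into the key estimate $L(\vec v)\ge d-(v_{\max}+v_{\min})$. It then suffices to bound $v_{\max}+v_{\min}$ from above, and the point is that this decouples into two independent one-sided eigenvalue estimates.

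For the upper bound on $v_{\max}$, I would fix its value and apply Cauchy--Schwarz to the other $d-1$ entries: they sum to $1-v_{\max}$ and their squares sum to $d-v_{\max}^2\ge(1-v_{\max})^2/(d-1)$, with equality iff they are all equal; solving the resulting quadratic gives $v_{\max}\le\frac{(d-1)\sqrt{d+1}+1}{d}$. The matching bound $v_{\min}\le-\frac{\sqrt{d+1}-1}{d}$ is exactly the upper bound on the minimal eigenvalue already established in the proof of \thref{thm:NegBound} (there written $N^-\le|v_{\min}|$). Adding the two and simplifying yields $v_{\max}+v_{\min}\le\sqrt{d+1}-\frac{2}{d}\sqrt{d+1}+\frac{2}{d}$, whence $L(\vec v)\ge d-\sqrt{d+1}+\frac{2}{d}\sqrt{d+1}-\frac{2}{d}$, the asserted bound.

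For the equality case I would trace the three inequalities back simultaneously: saturation forces every entry into $\{v_{\min},0,v_{\max}\}$, forces the $d-1$ entries below the maximum to be equal (so $v_{\max}=\frac{(d-1)\sqrt{d+1}+1}{d}$), and forces $v_{\min}=\frac{1-\sqrt{d+1}}{d}$. These conditions leave only the two-valued spectrum of \eqref{eq:v} --- which is precisely the spectrum \eqref{eq:LowerSpectrum} --- and a direct substitution confirms it attains the bound. I expect the only real subtlety to be this last bookkeeping step: one must check that the two a priori unrelated one-sided bounds (largest entry maximal, smallest entry maximal) are saturated by one and the same vector, so that the competition between enlarging $v_{\max}$ and shrinking $|v_{\min}|$ --- both of which lower $L$ --- is resolved at a single, uniquely determined spectrum rather than along a family.
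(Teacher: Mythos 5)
Your proof is correct, but it takes a genuinely different route from the one in the paper. The paper first reduces to the special case where the nonzero entries of $\vec{v}$ take only two values $a>0$ and $-b<0$ (with multiplicities $m,n$), computes the objective as an explicit function $f(m,n)$, shows $\partial f/\partial m>0$ and $\partial f/\partial n<0$ on the relevant domain, and then handles general vectors by an averaging step that replaces $\vec{v}$ by effective parameters $a,m',b,n'$ with $m',n'$ not necessarily integers. Your argument replaces all of this with the identity $L(\vec{v})=v_{\max}S_-+|v_{\min}|S_+=\bigl[v_{\max}S_++|v_{\min}|S_-\bigr]-(v_{\max}+v_{\min})$, the entrywise bound $\sum_j v_j^2\leq v_{\max}S_++|v_{\min}|S_-$, and two standard one-sided extremal-entry estimates, $v_{\max}\leq[(d-1)\sqrt{d+1}+1]/d$ and $v_{\min}\leq(1-\sqrt{d+1})/d$ (the latter being exactly the bound $N^-\leq-\lambda_{j,d}$ from the proof of Theorem~1). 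I checked the algebra: $\|v\|_1\mp1=2S_{\mp}$ gives your form of $L$, the difference of the two bilinear expressions is indeed $(v_{\max}+v_{\min})(S_+-S_-)=v_{\max}+v_{\min}$, and $d-\bigl[\tfrac{(d-1)\sqrt{d+1}+1}{d}+\tfrac{1-\sqrt{d+1}}{d}\bigr]$ reproduces the stated constant. Your equality analysis is also sound: tightness of the $v_{\max}$ bound already forces the vector to be $(M_1,M_2,\ldots,M_2)$ up to permutation, which automatically saturates the other two inequalities, so the three conditions are compatible and pin down \eqref{eq:v} uniquely. What your approach buys is the elimination of both the calculus and the two-stage reduction, yielding a shorter and more transparent equality case; what the paper's computation buys is an explicit formula $f(m,n)$ for all two-valued spectra, which it reuses elsewhere (e.g., for the upper bound on the channel negativity). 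Either way, your proof is complete as it stands.
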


\begin{proof}The lemma is trivial when $d=2$ since $\vec{v}^\downarrow$ is completely determined by the assumption in the lemma. When  $d\geq3$,
	let us first  consider the special case in which $\vec{v}$ has $m$ components equal to $a>0$, $n$ components equal to $-b<0$, and all other components equal to 0, where $m,n\geq1$ and $m+n\leq d$. By the assumption in the lemma, 
	\begin{equation}
	ma-nb=1,\quad ma^2+nb^2=d,
	\end{equation} 
	which  determines $a$ and $b$ as functions of $m,n$,
	\begin{align}
	a=\frac{1}{m+n}\biggl( 1+\sqrt{\frac{n}{m}(dm+dn-1)}\biggr),\quad
	b=\frac{1}{m+n}\biggl( -1+\sqrt{\frac{m}{n}(dm+dn-1)}\biggr).
	\end{align}	
	Accordingly,
	\begin{align}
	&\frac{1}{2}\bigl[v_{\max} (\|v\|_1-1)+|v_{\min}| (\|v\|_1+1)\bigr]=f(m,n):=\frac{1}{2}\bigl[a(ma+nb-1)+b(ma+nb+1)\bigr]\nonumber\\
	&=\frac{m-n}{m+n}\sqrt{\frac{dm+dn-1}{mn}}-\frac{2}{m+n}+d.
	\end{align}
	Although the vector $\vec{v}$ is defined only when $m,n$ are positive integers. The function $f(m,n)$ is well defined when  $m,n$ are positive real numbers.  When  $d\geq 3$, $m,n\geq 1$, and $m+n\leq d$, calculation shows that $\partial f/\partial m>0 $ and $\partial f/\partial n<0 $. Consequently,
	\begin{equation}
	f(m,n)\geq f(1,d-1)=d-\sqrt{d+1}+\frac{2}{d}\sqrt{d+1}-\frac{2}{d},
	\end{equation}
	and the lower bound is attained only
	when $m=1$ and $n=d-1$, in which case $\vec{v}^\downarrow$ is given by \eref{eq:v}. Therefore, the lemma hods if nonzero components of $\vec{v}$ take on only two distinct values.

	Now we are ready to prove the lemma in general. 
	Without loss of generality, we may assume that $\vec{v}$ has $m+n$ nonzero components, with its first $m$ components being positive and the last $n$ components being negative, where $m,n\geq 1$ and $m+n\leq d$. Define 
	\begin{equation}
	\begin{aligned}
	a=\frac{\sum_{j=1}^m v_j^2}{\sum_{j=1}^m v_j},\quad  m'=\frac{(\sum_{j=1}^m v_j)^2}{\sum_{j=1}^m v_j^2},\quad
	b=\frac{\sum_{j=d-n+1}^d v_j^2}{\sum_{j=d-n+1}^d |v_j|}, \quad n'=\frac{(\sum_{j=d-n+1}^d v_j)^2}{\sum_{j=d-n+1}^d v_j^2}.
	\end{aligned}
	\end{equation}
	Then $a, m', b, n'$ satisfy the following equation,
	\begin{align}
	m'a =\sum_{j=1}^m v_j, \quad      m'a^2=\sum_{j=1}^m v_j^2,\quad 
	-n'b =\sum_{j=d-n+1}^d v_j,\quad    n'b^2=\sum_{j=d-n+1}^d v_j^2.
	\end{align}
	In addition, $0\leq a\leq v_{\max}$,  $0\leq b\leq |v_{\min}|$, $1\leq m'\leq m$, $1\leq n'\leq n$, $m'+n'\leq d$, and $m'a+n'b>1$. Therefore,
	\begin{align}
	&\frac{1}{2}\bigl[v_{\max} (\|v\|_1-1)+|v_{\min}| (\|v\|_1+1)\bigr]=\frac{1}{2}\bigl[v_{\max} (m'a+n'b-1)+|v_{\min}| (m'a+n'b+1)\bigr]\nonumber\\
	& \geq \frac{1}{2}\bigl[a (m'a+n'b-1)+b (m'a+n'b+1)\bigr] =f(m',n')\geq f(1,d-1)=d-\sqrt{d+1}+\frac{2}{d}\sqrt{d+1}-\frac{2}{d}.
	\end{align}
	Here the first inequality is saturated iff $a=v_{\max}$ and $b=|v_{\min}|$, which can happen iff nonzero components of $\vec{v}$ take on only two distinct values. The second inequality is saturated iff $m'=1$ and $n'=d-1$, which imply that $m=1$ and $n=d-1$. If the two inequalities are saturated simultaneously, then $\vec{v}^\downarrow$ necessarily has the form in \eref{eq:v}. 
\end{proof}

\section{Upper bound for the channel negativity}
For completeness, here we derive an upper bound for the channel negativity. According to \eref{eq:NegCs3},
\begin{align}
N_\rmC(\{Q_j\})&=\frac{1}{2} \max_{j,k}\bigl[\lambda_{j}^{\max}(\|\lambda_k\|_1-1)- \lambda_{j}^{\min}(\|\lambda_k\|_1+1) \bigr]
\leq\frac{1}{2}\max_{j}\bigl[\lambda_{j}^{\max}(d-1) -\lambda_{j}^{\min}(d+1)\bigr]\nonumber\\ &\leq\frac{d-1}{\sqrt{2}d}\sqrt{(d+1)(d^2+d+2)}-\frac{1}{d}.
\end{align}
Here  the two inequalities follow from \lsref{lem:norm1} and \ref{lem:NegC} below, respectively. When $d$ is odd, the upper bound can be saturated when one element in $\{Q_j\}$ has the same eigenvalues as a  Wootters phase point operator, that is, $\pm1$ with multiplicity $(d\pm1)/2$, and another element has the eigenvalues determined by \eref{eq:NegC2v}. When $d$ is even, the upper bound cannot be saturated exactly, but can be approached approximately with relative deviation around $1/d^2$.

\begin{lemma}\label{lem:norm1}
	Suppose $\vec{v}=(v_1,v_2,\ldots,v_d)$ is a real vector in dimension $d$ satisfying  $\sum_j v_j=1$ and  $\sum_j v_j^2=d$. Then 
	\begin{equation}
	\|\vec{v}\|_1\leq \begin{cases}
	d & \mbox{$d$ is odd,}\\
	\sqrt{d^2-1} & \mbox{$d$ is even.}
	\end{cases}
	\end{equation}
	When $d$ is odd, the upper bound is saturated iff $\vec{v}$ has $(d+1)/2$ components equal to $1$ and $(d-1)/2$ components equal to $-1$. When $d$ is even, the upper bound is saturated iff $\vec{v}$ has $d/2$ components equal to	$(1\pm\sqrt{d^2-1})/d$, respectively. 
\end{lemma}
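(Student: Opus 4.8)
The plan is to reduce an arbitrary feasible $\vec v$ to one taking only two distinct nonzero values, then optimize over the multiplicities, handling the parity of $d$ only at the final step.

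First I would separate $\vec v$ into positive and negative parts. Let $m$ and $n$ denote the numbers of strictly positive and strictly negative components (zeros affect neither the constraints nor $\|\vec v\|_1$), and put $P=\sum_{v_j>0}v_j$, $N=\sum_{v_j<0}|v_j|$, $T_+=\sum_{v_j>0}v_j^2$, $T_-=\sum_{v_j<0}v_j^2$. Then $P-N=1$, $T_++T_-=d$, and $\|\vec v\|_1=P+N=1+2N$; note $m,n\ge1$ because a one-signed vector cannot meet both constraints when $d\ge2$. Cauchy--Schwarz gives $P^2\le mT_+$ and $N^2\le nT_-$, hence $P^2/m+N^2/n\le d$. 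On the line $P=N+1$ this confines $N$ to an interval whose right endpoint is reached precisely when both Cauchy--Schwarz inequalities are equalities, i.e.\ when all positive components equal some $a$ and all negative components equal some $-b$. Thus for fixed $(m,n)$ the maximum of $\|\vec v\|_1$ occurs at the two-value vector determined by $ma-nb=1$, $ma^2+nb^2=d$ --- exactly the system solved in the proof of \lref{lem:NegC}. Inserting those $a,b$ yields
\begin{equation}
\|\vec v\|_1\le g(m,n):=\frac{(m-n)+2\sqrt{mn\,[d(m+n)-1]}}{m+n}.
\end{equation}

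Next I would maximize $g$ over integers $m,n\ge1$ with $m+n\le d$. Setting $s=m+n$ and $t=m-n$, so that $mn=(s^2-t^2)/4$, gives
\begin{equation}
g=\frac{t+\sqrt{(s^2-t^2)(ds-1)}}{s}.
\end{equation}
Because $\sqrt{s^2-t^2}$ is strictly concave in $t$ on $(-s,s)$, $g$ is strictly concave in $t$, with its unique stationary point at $t^\ast=\sqrt{s/d}$, where a short computation gives $g=\sqrt{ds}$. Hence $g(m,n)\le\sqrt{ds}\le\sqrt{d\cdot d}=d$, and attaining the outer bound forces $s=d$, i.e.\ no zero components. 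For odd $d$ the continuous optimum $(s,t)=(d,1)$ is an admissible integer point, namely $m=(d+1)/2$, $n=(d-1)/2$ with $a=b=1$; it gives $\|\vec v\|_1=d$, and strict concavity in $t$ together with $\sqrt{ds}<d$ for $s<d$ makes it the unique maximizer.

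For even $d$ the apex $t=1$ is parity-forbidden, so I would compare the nearest admissible configurations. At $s=d$ the allowed $t$ are even, and strict concavity with apex at $t=1$ leaves only $t=0$ and $t=2$ as candidates; the elementary estimate $2\sqrt{d^2-1}\ge d+\sqrt{d^2-4}$ (which on squaring collapses to $d^2\ge d^2-4$) shows $g(d,0)\ge g(d,2)$. Every $s<d$ gives $g\le\sqrt{d(d-1)}<\sqrt{d^2-1}$, so the maximum is $g(d,0)=\sqrt{d^2-1}$, attained at $m=n=d/2$ with the two values $(1\pm\sqrt{d^2-1})/d$. I expect this parity-sensitive final step to be the main obstacle: the continuous relaxation cleanly caps $g$ at $\sqrt{ds}\le d$, but for even $d$ the true extremum lies off the continuous optimum, so one must pin it down by comparing the two flanking integer splits and then trace the equality case back through the Cauchy--Schwarz reduction to obtain the stated two-value characterization.
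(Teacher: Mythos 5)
Your proof is correct, and it is more systematic than the one in the paper. The paper treats the two parities asymmetrically: for odd $d$ it invokes the one-line norm inequality $\|\vec v\|_1\le\sqrt{d}\,\|\vec v\|_2=d$ and reads off the equality case from $|v_j|=1$ for all $j$, while for even $d$ it merely asserts that at the maximum all positive components coincide and all negative components coincide, leaving the remaining optimization as "straightforward to verify." You instead run a single pipeline for both parities: the Cauchy--Schwarz reduction of the positive and negative parts to a two-valued vector, the explicit objective $g(m,n)$, and the $(s,t)$ substitution showing $g\le\sqrt{ds}\le d$ with apex at $t^\ast=\sqrt{s/d}$. What this buys is precisely the step the paper glosses over in the even case: you must rule out $s<d$ (via $\sqrt{d(d-1)}<\sqrt{d^2-1}$) and compare the two integer splits $t=0$ and $t=2$ flanking the parity-forbidden apex $t=1$, which your estimate $2\sqrt{d^2-1}\ge d+\sqrt{d^2-4}$ settles (strictly, since equality would force $d^2=d^2-4$), and the same machinery delivers the uniqueness claims in both cases by tracing equality back through the Cauchy--Schwarz step. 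The cost is that the odd case, which the paper dispatches in one line, is carried through the full apparatus; the benefit is a uniform argument that actually substantiates the paper's "straightforward to verify."
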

\begin{proof}
	When $d$ is odd, the inequality in the lemma is trivial. When the inequality is saturated, all components of $\vec{v}$ must have absolute value 1. The assumption $\sum_j v_j=1$ then implies that $\vec{v}$ has $(d+1)/2$ components equal to $1$ and $(d-1)/2$ components equal to $-1$.
	
	When $d$ is even and $\|\vec{v}\|_1$ is maximized, then all positive components of $\vec{v}$ are equal, and so are all negative components. Now it is straightforward to verify the lemma.
\end{proof}

\begin{lemma}\label{lem:NegC}
	Suppose $\vec{v}=(v_1,v_2,\ldots,v_d)$ is a real vector in dimension $d$ satisfying  $\sum_j v_j=1$ and  $\sum_j v_j^2=d$. Let $v_{\max}=\max_j v_j$, $v_{\min}=\min_j v_j$. Then
	\begin{equation}\label{eq:NegC2}
	\frac{(d-1)v_{\max}-(d+1)v_{\min}}{2}\leq \frac{d-1}{\sqrt{2}d}\sqrt{(d+1)(d^2+d+2)}-\frac{1}{d},
	\end{equation}
	and the upper bound is saturated iff 
	\begin{equation}\label{eq:NegC2v}
	v_1^\downarrow=\frac{1}{d}+\frac{d^2-d+2}{d}\sqrt{\frac{d+1}{2(d^2+d+2)}}, \;
	v_2^\downarrow=\cdots=v_{d-1}^\downarrow=\frac{1}{d}+\frac{2}{d}\sqrt{\frac{d+1}{2(d^2+d+2)}}, \; v_d^\downarrow=\frac{1}{d}-\frac{d^2+d-2}{d}\sqrt{\frac{d+1}{2(d^2+d+2)}}.
	\end{equation}
\end{lemma}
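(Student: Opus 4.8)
The plan is to read the left-hand side as the value of a constrained optimization and to maximize it over all admissible vectors. The shift $\eta_j := v_j - 1/d$ turns the constraints $\sum_j v_j = 1$, $\sum_j v_j^2 = d$ into the homogeneous pair $\sum_j \eta_j = 0$ and $\sum_j \eta_j^2 = (d^2-1)/d$, while the objective becomes $\frac{(d-1)\eta_{\max} - (d+1)\eta_{\min}}{2} - \frac{1}{d}$. So proving the lemma reduces to bounding $\frac{(d-1)\eta_{\max}-(d+1)\eta_{\min}}{2}$ on the intersection of the sphere $\sum_j\eta_j^2 = (d^2-1)/d$ with the hyperplane $\sum_j\eta_j = 0$. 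The case $d=2$ is degenerate: the two constraints fix $\vec{\eta}$ up to sign, \eqref{eq:NegC2} holds with equality, and there is nothing to optimize. So I would assume $d\ge 3$.

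First I would pin down the shape of an optimizer, which exists because the constraint set is compact and the objective continuous. The key observation is that the objective depends only on the largest and smallest entries, so the remaining $d-2$ \emph{middle} entries are free except for feasibility. Fix one index attaining $\eta_{\max}$ and one attaining $\eta_{\min}$ (these differ, since the entries cannot all be equal), and replace the other $d-2$ entries by their common average. This preserves $\sum_j\eta_j=0$; since the new middle value is an average of numbers in $[\eta_{\min},\eta_{\max}]$, it leaves $\eta_{\max}$, $\eta_{\min}$, and hence the objective unchanged, while strictly decreasing $\sum_j\eta_j^2$ whenever the middle entries were not already equal. Because both the objective and the norm are homogeneous of degree one and the condition $\sum_j\eta_j=0$ is scale invariant, one can then rescale back onto the sphere by a factor strictly larger than one, strictly increasing the objective (which is positive at any maximizer). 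This contradicts optimality, so an optimizer must have the form $(M, c, \dots, c, m)$ with $d-2$ equal middle entries.

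With this reduction the problem becomes a three-parameter optimization: maximize $(d-1)M - (d+1)m$ subject to $M + (d-2)c + m = 0$ and $M^2 + (d-2)c^2 + m^2 = (d^2-1)/d$. I would solve it either by Lagrange multipliers or, more slickly, by viewing it as maximizing a linear functional over a sphere lying in a hyperplane with respect to the weighted inner product with weights $(1, d-2, 1)$: the maximizer is the normalized orthogonal projection of the functional's representative onto the hyperplane, and the maximal value is the sphere radius times the projected norm. Either route yields the ratio $M - c : c - m = (d-1):(d+1)$, the explicit critical vector $(M, c, m) = \frac{t}{d}(d^2-d+2,\, 2,\, -(d^2+d-2))$ with $t = \sqrt{(d+1)/(2(d^2+d+2))}$, and the value $\frac{d-1}{\sqrt{2}\,d}\sqrt{(d+1)(d^2+d+2)}$. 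Translating back via $v_j^\downarrow = \eta_j^\downarrow + 1/d$ reproduces exactly \eqref{eq:NegC2v}, and subtracting $1/d$ gives \eqref{eq:NegC2}. The sign choice $t>0$ is forced by requiring $M>c>m$, which simultaneously confirms that this critical point is the maximum and not the minimum.

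The main obstacle is the structural reduction. Since $\eta_{\max}$ and $\eta_{\min}$ are nonsmooth, one cannot directly apply Lagrange multipliers to all $d$ coordinates; the real content is the averaging-plus-rescaling argument that forces the $d-2$ middle entries to coincide at an optimizer, together with the care needed to handle ties in the extreme entries and to confirm that the optimizer is unique up to permutation (which is precisely what upgrades the inequality to the stated ``iff''). Once the reduction is established, the remaining optimization and the algebra verifying that the critical value equals the claimed right-hand side are routine.
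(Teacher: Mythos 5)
Your proof is correct and follows the same basic strategy as the paper's: reduce to vectors whose sorted form is $(a,b,\ldots,b,c)$ with $d-2$ equal middle entries, then solve the resulting low-dimensional constrained optimization. The one substantive difference is that the paper merely \emph{asserts} this reduction (``when the left-hand side is maximized, $\vec{v}^\downarrow$ necessarily has the form $(a,b,b,\ldots,b,c)$'') and then locates the maximum by eliminating $c$ in favor of $a$ via the two constraints and differentiating with respect to $a$, whereas you actually prove the reduction through the centering $\eta_j=v_j-1/d$ combined with the averaging-plus-rescaling perturbation, and you solve the reduced problem by Lagrange multipliers/orthogonal projection, obtaining the ratio $(M-c):(c-m)=(d-1):(d+1)$ and hence the same optimizer and optimal value $\frac{d-1}{\sqrt{2}d}\sqrt{(d+1)(d^2+d+2)}-\frac{1}{d}$. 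Your justification of the reduction is precisely the step with real content that the paper leaves implicit, and it also makes the equality analysis (the ``iff'' in \eqref{eq:NegC2v}) airtight, since the averaging move strictly increases the objective unless the middle entries already coincide and the reduced circle problem has a unique maximizer.
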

\begin{proof}
	When $d=2$, $v_{\max}=(1+\sqrt{3})/2$ and  $v_{\min}=(1-\sqrt{3})/2$ are determined by the assumption, and the lemma holds. When $d\geq3$ and the left hand side  in \eref{eq:NegC2} is maximized, $\vec{v}^\downarrow$ necessarily has the form $\vec{v}^\downarrow=(a,b,b,\ldots,b,c)$, where $a,b,c$ satisy $a\geq b\geq c$. The assumption in the lemma leads to the two constraints $a+(d-2) b+c=1$ and $a^2+(d-2)b^2+c^2=d$, which determine $c$ as a function of $a$,
	\begin{equation}
	c=\frac{1-a-\sqrt{(d-2)(-da^2+2a+d^2-d-1)}}{d-1}. 
	\end{equation}
	Therefore,
	\begin{align}
	&\frac{(d-1)v_{\max}-(d+1)v_{\min}}{2}=	\frac{(d-1)a-(d+1)c}{2}\nonumber \\
	&=\frac{(d^2-d+2)a+(d+1)\bigl[-1+\sqrt{(d-2)(-da^2+2a+d^2-d-1)}\,\bigr]}{2(d-1)}.
	\end{align}	
	Differentiation with respect to $a$  shows that the maximum is given by \eref{eq:NegC2} and is attained only when
	\begin{equation}
	a=\frac{1}{d}+\frac{d^2-d+2}{d}\sqrt{\frac{d+1}{2(d^2+d+2)}}. 
	\end{equation}
	Consequently,
	\begin{equation}
	b=\frac{1}{d}+\frac{2}{d}\sqrt{\frac{d+1}{2(d^2+d+2)}}, \quad c=\frac{1}{d}-\frac{d^2+d-2}{d}\sqrt{\frac{d+1}{2(d^2+d+2)}},
	\end{equation}
	and $\vec{v}^\downarrow$ has the form in \eref{eq:NegC2v}. 
	
\end{proof}

\end{document}